\newtheorem{Definition}{Definition}[section]
\newtheorem{Theorem}[Definition]{Theorem}
\newtheorem{Lemma}[Definition]{Lemma}
\newtheorem{Corollary}[Definition]{Corollary}
\theoremstyle{remark}
\newcommand{\R}{\mathbb{R}}
\renewcommand{\O}{\mathcal{O}}
\newcommand{\Sum}{\displaystyle\sum}
\newcommand{\C}{\mathcal{C}}
\renewcommand{\c}{C}
\newcommand{\TT}{{\mathbf{T}}}
\newcommand{\TTf}{{\TT_f}}
\newcommand{\TTg}{{\TT_g}}
\newcommand{\overlay}{\mathcal{C}}
\renewcommand{\triangle}{\mathcal{T}}
\newcommand{\Sv}{\Sigma_v}
\newcommand{\Se}{\Sigma_e}
\renewcommand{\r}{r}
\renewcommand{\Xi}{(X_1,...,X_{\r-1})}
\newcommand{\M}{\mathcal{M}}
\title{Computing the Distance between\\Piecewise-Linear Bivariate
  Functions\thanks{%
    Work on this paper was initiated at the International
    INRIA-McGill-Victoria Workshop on Problems in Computational
    Geometry, held at the Bellairs Research Institute of McGill
    University in Barbados, West Indies.}}
\author{Guillaume Moroz\thanks{%
    INRIA Nancy - Grand Est,
    615 rue du Jardin Botanique,
    54600 Villers-l\`es-Nancy,
    France.}
  \and Boris Aronov\thanks{%
    Department of Computer Science and Engineering,
    Polytechnic Institute of NYU, Brooklyn, NY~11201-3840, USA;
    \textsl{aronov@poly.edu}.
    Work by B.A.\ on this paper has been supported by grant No.~2006/194
    from the U.S.-Israel Binational Science Foundation, by NSF
    Grant CCF-08-30691, and by NSA MSP Grant H98230-10-1-0210.}}
\date{}
\begin{document}
    \begin{titlepage}
      \maketitle
      \begin{abstract}
        We consider the problem of computing the distance between two
        piecewise-linear bivariate functions $f$ and $g$ defined over
        a common domain $M$.  We focus on the distance induced by the
        $L_2$-norm, that is $\|f-g\|_2=\sqrt{\iint_M (f-g)^2}$.  If $f$
        is defined by linear interpolation over a triangulation of $M$
        with $n$ triangles, while $g$ is defined over another such
        triangulation, the obvious na\"ive algorithm requires
        $\Theta(n^2)$ arithmetic operations to compute this distance.
        We show that it is possible to compute it in $\O(n\log^4 n)$
        arithmetic operations, by reducing the problem to multi-point
        evaluation of a certain type of polynomials.

        We also present an application to terrain matching.
      \end{abstract}
      \thispagestyle{empty}
    \end{titlepage}

    \section{Introduction and problem statement}

    In this paper we use a novel combination of tools from
    computational geometry and computer algebra to speed up a
    computation of $\|\cdot\|_2$-norm distance between two bivariate
    piecewise-linear functions.  Algebraic tools have already been
    used in other recent work in computational geometry, to seemingly
    defy ``obvious'' lower bounds: for example, Ajwani, Ray, Seidel,
    and Tiwary~\cite{centroid} use algebraic tools to compute the
    centroid of all vertices in an arrangement of $n$ lines in the
    plane, without explicitly computing the vertices.

    We feel that working on computational geometry problems using a
    combination of traditional and algebraic tools expands the
    repertoire of questions that can be approached and answered
    satisfactorily.  Employing such combinations of methods expands
    the horizon of solvable problems.  Indeed, in a significant recent
    development, several breakthrough results have been obtained by
    applying algebraic methods to problems of combinatorial geometry:
    Guth and Katz's recent work on the joints
    problem~\cite{Guth-Katz-joints} and on the Erd\H os distinct
    distance problem~\cite{Guth-Katz-distinct} has triggered an
    avalanche of activity; see, for example,
    \cite{EKS,ES,KMS,Mecg11,Q,Solymos-Tao,KSS-lines-joints}.  It
    appears that the use of algebraic tools in geometry (both
    combinatorial and computational) allows one to approach problems
    inaccessible by more traditional methods.  The present work is
    just one step in that direction.

    \paragraph*{Background and previous work}
    
    In~\cite{terrains}, Aronov et al. considered a quite common
    object in computational geometry and geographical information
    systems (GIS): that of a ``terrain.''  A \emph{terrain} is (the graph of)
    a bivariate function over some planar domain, say, a rectangle or a
    square.  It is often used to model geographic terrains, e.g.,
    elevation in a mountainous locale, but can also be applied to
    storing any two-dimensional data sets, such as precipitation or
    snow cover data.  A common (though by no means the only) way of
    interpolating and representing discrete two-dimensional data is a
    \emph{triangulated irregular network} (\emph{TIN}): the values of
    a bivariate function are given at discrete points in, say, the
    unit square.  The square is triangulated, using data points as
    vertices.  The function is then linearly interpolated over each
    triangle.  This produces a piecewise-linear approximation of the
    ``real'' (and unknown) function.

    The problem raised in~\cite{terrains} was that of comparing two
    terrains over the same domain, say, the unit square, but given over
    two unrelated triangulations.  One could imagine comparing the
    outcome of two different ways of measuring the same data, or
    finding correlation between, say, the elevation and the snow cover
    over the same geographic region.  The focus of that work was on
    identifying linear dependence between the two functions or
    terrains.  Three natural distance measures between the two
    functions were considered and several algorithms presented for
    computing such a distance and optimizing it, subject to vertical
    translation and scaling.  The only observation made for the
    $\|\cdot\|_2$ norm (see the definitions below) in~\cite{terrains}
    is that, if the two terrains share a triangulation, both the
    distance computation and the optimization problem can be solved
    easily in linear time, while it appears that in general quadratic
    time seems to be needed to deal with the case of arbitrarily
    overlapping triangulations.  The substance of the current work is
    disproving this assertion and describing a near-linear-time
    algorithm for both problems.

    \paragraph*{Problem statement and results}

    Given bivariate functions $f,g \colon M \to \R$, one can naturally
    define a distance between them as
    \[
    \|f-g\|_2 = \bigl(\iint_M (f(x,y)-g(x,y))^2 dx dy\bigr)^{1/2}.
    \]
    Expanding the expression under the integral, we obtain $\iint
    (f-g)^2 = \iint f^2 - 2 \iint fg + \int g^2$.  If the two
    functions are piecewise linear, defined over different
    triangulations of $M$, only the middle term presents a problem for
    efficient computation. Thus, in the bulk of the paper we will
    focus on the computation of $\iint fg$, showing the following:

    \begin{Theorem}
      \label{thm:product}
      Given piecewise-linear functions $f$ and $g$ defined over
      different triangulations of the same domain $M$, with $n$
      triangles each, $\iint_M f(x,y)g(x,y) dx dy$ can be computed
      using $\O(n\log^4 n)$ arithmetic operations.      
    \end{Theorem}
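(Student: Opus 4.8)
The plan is to reduce the two-dimensional product integral to a one-dimensional problem that can be attacked with fast polynomial arithmetic, so that the $\Theta(n^2)$ cells of the overlay of the two triangulations are never materialised. As a first step I would rewrite the integral so that one of the two functions enters only through its (piecewise constant) gradient. Since $g$ is linear on each triangle $\tau$ of its triangulation, say $g=a_\tau x+b_\tau y+c_\tau$ there, we have $\iint_M f g=\sum_{\tau}\bigl(a_\tau\iint_\tau xf+b_\tau\iint_\tau yf+c_\tau\iint_\tau f\bigr)$, so it suffices to compute, for every triangle $\tau$ of $g$, the three moments $\iint_\tau f$, $\iint_\tau xf$, $\iint_\tau yf$ of the piecewise-linear function $f$. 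Each is the integral over $\tau$ of a fixed piecewise-polynomial field of bounded degree defined over the triangulation of $f$, so the problem becomes symmetric: integrate a piecewise-polynomial field over the triangles of an unrelated triangulation.

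Next, using Green's theorem I would turn each area moment into a sum of line integrals along the edges of the triangles, reducing everything to integrals taken along the edges of one triangulation as they traverse the pieces of the other. Fix an edge $e$ of $g$'s triangulation, running from $p$ to $q$ and parametrised by $t\in[0,1]$; the relevant integrand is a fixed polynomial in $t$ times the restriction of $f$ to $e$. Because $f$ is continuous and piecewise linear, this restriction is a continuous piecewise-linear function of $t$ whose only breakpoints are the parameters $t_i$ at which $e$ crosses the edges $e_i$ of $f$'s triangulation, and whose slope changes by $\delta_i=(q-p)\cdot[\nabla f]_{e_i}$ at each crossing, where the jump $[\nabla f]_{e_i}$ depends only on the crossed edge $e_i$. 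Writing $f$ along $e$ as a global linear part plus $\sum_i\delta_i\,(t-t_i)_+$ (the positive part) and integrating term by term, the contribution of $e$ becomes an easily computed term plus $\sum_i\delta_i\,\Phi_e(t_i)$, where $\Phi_e(s)=\int_s^1\phi_e(t)(t-s)\,dt$ is a polynomial of bounded degree.

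Expanding $\Phi_e$ in powers of $t_i$, the entire contribution of $e$ is determined by a \emph{constant} number of weighted power sums $S_k(e)=\sum_i\delta_i\,t_i^{\,k}$, taken over the edges $e_i$ of $f$ that cross $e$. The heart of the algorithm, and the main obstacle, is to evaluate all these weighted power sums — one constant-size batch per edge, aggregated over the $\Theta(n^2)$ \emph{implicit} crossings — in near-linear time. Here I would never enumerate the crossings; instead I would organise the edges of $f$ in a hierarchical sweep structure (a segment-tree/trapezoidal decomposition over the $\O(n)$ vertical slabs induced by the vertices), at each node of which the edges passing through are summarised by a single polynomial that encodes the power sums of their crossing parameters. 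Batching the insertions, deletions, and queries through Fast-Fourier-based polynomial multiplication turns the accumulation of these power sums into multi-point evaluation of the special polynomials built at the tree nodes, each handled in $\O(m\log^2 m)$ operations; composing this with the $\O(\log^2 n)$ overhead of the two-level sweep structure yields the claimed $\O(n\log^4 n)$.

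The delicate points I expect to fight with are twofold. First, arranging the crossing-parameter power sums so that the coupling between a query edge $e$ and the data edges $e_i$ factors cleanly through the polynomial structure — the weight $\delta_i$ separates by linearity of the gradient jump, but the crossing parameter $t_i$ along $e$ is a ratio depending on both edges, and this must be absorbed into the polynomials maintained at the tree nodes. Second, keeping the algebraic degree — and hence the sizes of the polynomials multiplied at each node — bounded, so that the FFT-based multi-point evaluation indeed runs within the stated bound rather than accumulating degree as the sweep progresses.
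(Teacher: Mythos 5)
Your overall architecture runs parallel to the paper's---reduce $\iint_M fg$ to bilinear sums over the $\Theta(n^2)$ implicit edge--edge crossings, organize those crossings in a hereditary-segment-tree-like structure (the paper uses the Chazelle--Edelsbrunner--Guibas--Sharir bipartite clique decomposition, with cliques of total size $\O(n\log^2 n)$), and finish with FFT-based polynomial arithmetic---but there are two genuine gaps. The first is that your Green's theorem step computes the wrong quantity. You reduce each moment $\iint_\tau x^a y^b f$ to line integrals over $\partial\tau$ involving only the restriction of $f$ to $\partial\tau$. No such formula can hold: let $f$ be a ``tent'' whose support lies strictly inside $\tau$ (its apex a vertex of $\TTf$, its base edges interior to $\tau$), so that $f$ vanishes identically on $\partial\tau$. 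Then every one of your per-edge contributions is zero---there are no crossing parameters $t_i$ and the linear part of $f|_e$ is zero---yet $\iint_\tau f>0$. The source of the error is that the $y$-antiderivative of $x^ay^bf$ is only \emph{locally} polynomial, one polynomial per triangle of $\TTf$; applying Green's theorem cell by cell in the overlay leaves, besides the $\partial\tau$ terms, jump integrals along the edges of $\TTf$ clipped to $\tau$ (carrying the gradient discontinuities of $f$), plus terms at vertices of $\TTf$ interior to $\tau$, which require point location. These extra terms have the same bilinear ``query edge versus data edge'' structure, so your framework could absorb them symmetrically, but as written the derivation is incorrect, not merely incomplete.

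The second gap is the crux of the whole theorem, and you flag it yourself as something to ``fight with'' rather than resolve: because the crossing parameter $t_i$ is a \emph{rational} function of both the query edge $e$ and the data edge $e_i$ (its denominator is essentially the slope difference $s(e_i)-s(e)$), the weighted power sums $S_k(e)=\sum_i\delta_i t_i^{\,k}$ cannot be accumulated once per tree node. There is no query-independent ``single polynomial that encodes the power sums of the crossing parameters,'' because the crossing parameters do not exist independently of $e$; a node summary that is additive over insertions and deletions of data edges, yet recovers $S_k(e)$ for every query edge $e$, is exactly what is missing. The paper's resolution (Section~\ref{sec:multipoint}) is to make the query edge symbolic: within a clique $(R,B)$, the contribution of each pair is $v(e_1)w(e_2)P_{i,j}(\ldots)/(s(e_2)-s(e_1))^{i+j+1}$ with $i+j+1\le 3$; treating $s(e_1),y(e_1)$ as variables $X,Y$, the sum over $e_2\in B$ of these rational functions is collapsed by divide-and-conquer pairing with fast multiplication into a single fraction $N(X,Y)/D(X)$ with $\deg_X=\O(|B|)$ and $\deg_Y=\O(1)$ (Lemma~\ref{lem:addfraction}), which is then evaluated at all $|R|$ query edges at once by multi-point evaluation (Lemma~\ref{lem:multipoint}). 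Without this mechanism---or an equivalent one---your ``batching through FFT'' is a restatement of the goal rather than an algorithm, so the proposal does not yet constitute a proof; with it, your plan would essentially converge to the paper's.
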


    Armed with this result, as already mentioned, we can quickly
    compute $\|f-g\|_2$:

    \begin{Theorem}
      \label{thm:distance}
      Given piecewise-linear functions $f$ and $g$ defined over
      different triangulations of the same domain $M$, with $n$
      triangles each, $\|f-g\|_2$ can be computed using $\O(n\log^4
      n)$ arithmetic operations.
    \end{Theorem}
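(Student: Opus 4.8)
The plan is to reduce Theorem~\ref{thm:distance} directly to Theorem~\ref{thm:product}, which I am permitted to assume. The starting point is the algebraic identity already displayed in the excerpt:
\[
\|f-g\|_2^2 = \iint_M (f-g)^2 = \iint_M f^2 - 2\iint_M fg + \iint_M g^2.
\]
The strategy is to compute each of the three terms on the right-hand side within the claimed $\O(n\log^4 n)$ budget, sum them, and take a square root. Since the left-hand side is a single nonnegative real number, once the value of $\|f-g\|_2^2$ is in hand, one additional arithmetic operation (a square root) yields $\|f-g\|_2$, so no asymptotic cost is added by the final step.

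First I would dispose of the two ``easy'' terms, $\iint_M f^2$ and $\iint_M g^2$. Each of these is the integral of the square of a \emph{single} piecewise-linear function over \emph{its own} triangulation; there is no interaction between two unrelated triangulations. On each triangle $\triangle$ of the triangulation defining $f$, the function $f$ is linear, hence $f^2$ is a quadratic polynomial whose integral over $\triangle$ has a closed-form expression in terms of the triangle's three vertex values and its area (for instance, $\iint_\triangle f^2 = \tfrac{\operatorname{area}(\triangle)}{6}\bigl(a^2+b^2+c^2+ab+bc+ca\bigr)$, where $a,b,c$ are the values of $f$ at the three vertices). Summing these contributions over the $n$ triangles of $f$'s triangulation costs $\O(n)$ arithmetic operations, and likewise for $g$. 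Thus both $\iint_M f^2$ and $\iint_M g^2$ are obtained in linear time, comfortably inside the budget.

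The remaining term, the cross term $\iint_M fg$, is precisely the quantity whose near-linear-time computability is the content of Theorem~\ref{thm:product}; invoking that theorem gives $\iint_M fg$ in $\O(n\log^4 n)$ arithmetic operations. Combining the three terms via the identity above and taking the square root therefore yields $\|f-g\|_2$ in $\O(n) + \O(n) + \O(n\log^4 n) + \O(1) = \O(n\log^4 n)$ arithmetic operations, as claimed. I expect no genuine obstacle here: the entire difficulty of the problem has been quarantined into Theorem~\ref{thm:product}, and this final statement is essentially a bookkeeping corollary. The only point deserving a word of care is that the claimed bound concerns \emph{arithmetic} operations, so the square-root extraction should be counted as a single such operation (as is standard in the real-RAM / algebraic model), rather than raising questions of numerical precision or exact representability.
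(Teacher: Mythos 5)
Your proposal is correct and follows exactly the paper's own route: the paper reduces Theorem~\ref{thm:distance} to Theorem~\ref{thm:product} via the identity $\iint (f-g)^2 = \iint f^2 - 2\iint fg + \iint g^2$, observing that only the cross term is nontrivial since each squared term is computed in linear time over its own triangulation. Your explicit per-triangle formula $\frac{\operatorname{area}(\triangle)}{6}(a^2+b^2+c^2+ab+bc+ca)$ is a correct (and slightly more detailed) rendering of what the paper leaves implicit.
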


    Na\"ively, the integral in Theorem~\ref{thm:product} can be
    expressed as a sum of integrals over each cell appearing in the
    overlay of the two triangulations of $f$ and $g$.  Unfortunately,
    this overlay has a quadratic number of cells, in the worst case.
    The main idea of our algorithm is to reduce the computation of the
    integral to double sums of algebraic functions over grids, which allows
    us tu use fast multi-point evaluation algorithms.

    The paper is organized as follows. In section~\ref{sec:triangles},
    we will show how the integral of a function over a convex polygon
    can be expressed as a sum of elementary algebraic functions over
    its vertices.  Applying the process to a convex decomposition of a
    region $M$ expresses the value of $\iint_M fg$ as a summation over
    the vertices of the decomposition.  Then, in
    section~\ref{sec:triangulation}, we show how the integral over the
    overlay of the two triangulations can be reduced to a sum of
    elementary functions over pairs of edges, plus some additional
    terms computable in linear time. In section~\ref{sec:bipartite},
    we use the bipartite clique decomposition to arrange the pairs of
    edges in complete grids of fairly rigid form. Finally, in
    section~\ref{sec:multipoint}, we use a fast multi-point evaluation
    algorithm to compute the sums over each grid, completing the
    description of our method.  

    %


    \section{How to integrate over a convex subdivision}
    \label{sec:triangles}
    
    Consider a bivariate function $h$ defined over a convex polygon $\c$ in
    the plane.  To simplify our presentation and without loss of generality, we
    assume that all vertices of $\c$ lie to the right of the $y$-axis.  For
    a vertex $p$ of $\c$ we refer to the lines supporting the edges of $\c$
    incident to $p$ as $L(p,\c)$ (the one with higher slope) and $U(p,\c)$
    (the one with lower slope); to the left of $p$, $L(p,\c)$ is
    below $U(p,\c)$. Let
    \begin{align*}
      L(p,\c)&\colon y=y_l(p,\c)+s_l(p,\c)x, \text{ and}\\
      U(p,\c)&\colon y=y_u(p,\c)+s_u(p,\c)x.
    \end{align*}
    We omit the explicit dependence on $C$ and/or $p$ whenever it causes no
    confusion.  Define
    \[\delta(p,\c)=\begin{dcases*}  
      -1 & if $\c$ is above both $L(p,\c)$ and $U(p,\c)$, or below both of them,\\
      +1 & if $\c$ is below $U(p,\c)$ and above $U(p,\c)$, or vice versa.
                       \end{dcases*}
    \]
    Finally, put
    \[
    \triangle(p,\c,h) :=
    \delta(p,\c)\int_{x=0}^{x_{p_j}}\int_{y=y_l(p,\c)+s_l(p,\c)x}^{y_u(p,\c)+s_u(p,\c)x}
    h(x,y)dxdy.
    \]
    In words, $\triangle$ is the signed integral of $h$ over the triangle
    $T_p$ delimited by the $y$-axis, $L(p,\c)$, and $U(p,\c)$.

    With the above notation, we express the integral of $h$
    over $\c$ in a convenient way as a sum of terms associated with
    its vertices:
    
    \begin{Lemma}
    \label{lem:sum}
    Let $\c$ be a convex polygon with vertices $p_1,...,p_k$,
    and $h$ a bivariate function.  Then
    \begin{equation}
      \label{eq:sum}
      \iint_{\c} h(x,y)dxdy = \Sum_{j=1}^k 
      \triangle(p_j,\c,h).
      \end{equation}
    \end{Lemma}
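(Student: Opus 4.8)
The plan is to observe that the right-hand side of~\eqref{eq:sum} integrates $h$ against a fixed \emph{signed} region, so it suffices to prove the purely geometric identity
\[
\mathbf{1}_{\c} \;=\; \Sum_{j=1}^k \delta(p_j,\c)\,\mathbf{1}_{T_{p_j}}
\]
almost everywhere, where $\mathbf{1}_{S}$ is the indicator of a region $S$ and $T_{p_j}$ is the triangle bounded by the $y$-axis, $L(p_j,\c)$ and $U(p_j,\c)$. Integrating this identity against the given $h$ and recalling $\triangle(p_j,\c,h)=\delta(p_j,\c)\iint_{T_{p_j}}h$ then yields~\eqref{eq:sum}. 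Since all vertices lie to the right of the $y$-axis, I only need the identity at points $z=(x_0,y_0)$ with $x_0>0$.

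First I would fix a generic abscissa $x_0>0$, avoiding the finitely many vertex abscissae, and study the restriction of the signed sum to the vertical line $x=x_0$ as a step function $S(y_0)$. A triangle $T_{p_j}$ meets this line exactly when $x_0<x_{p_j}$, in the segment between $L(p_j,\c)$ and $U(p_j,\c)$ evaluated at $x_0$ (recall $L<U$ to the left of $p_j$), contributing $+\delta(p_j,\c)$ at its lower endpoint and $-\delta(p_j,\c)$ at its upper endpoint. The key bookkeeping step is to group these endpoints by polygon edge: the supporting line $\ell_e$ of each edge $e=p_ap_b$ with $x_a<x_b$ is a boundary line of both $T_{p_a}$ and $T_{p_b}$, and so produces a jump at the common value $\ell_e(x_0)$ in whichever of these two triangles actually reaches $x_0$.

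This gives three cases: (i) $x_0>x_b$, where neither triangle reaches the line and $e$ contributes nothing; (ii) $x_a<x_0<x_b$, where only $T_{p_b}$ reaches the line, so $e$ contributes a single uncancelled jump at $\ell_e(x_0)$; and (iii) $x_0<x_a$, where both triangles reach the line and their jumps at $\ell_e(x_0)$ must cancel. Granting the cancellation in case~(iii), the only surviving jumps come from the edges straddling the vertical line, of which there are exactly two when $x_0$ lies in the $x$-range of $\c$ — one on the lower boundary chain and one on the upper — located precisely at the two endpoints of the cross-section $\c\cap\{x=x_0\}$; when $x_0$ lies outside that range every edge falls under case~(i) or~(iii) and $S\equiv0$, matching $\mathbf{1}_{\c}$.

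The hard part will be the sign bookkeeping in cases~(ii) and~(iii): verifying that (I)~for a non-straddling edge the two jumps cancel, i.e.\ $\sigma_a\delta(p_a,\c)+\sigma_b\delta(p_b,\c)=0$ where $\sigma=\pm1$ according as $\ell_e$ is the lower or upper boundary line of the relevant triangle, and (II)~the surviving jump of the lower straddling edge is $+1$ and that of the upper one is $-1$, so that $S$ is exactly the indicator of the cross-section. I would settle both by classifying each vertex as leftmost, rightmost, interior to the lower chain, or interior to the upper chain, determining that $\delta(p_j,\c)=+1$ at the two extreme vertices and $-1$ at the interior ones, recording whether each incident edge-line plays the role of $L$ or $U$, and then checking the two sign identities over the resulting finite list of edge configurations. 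Testing a single explicit triangle already confirms the signs and the final cancellation, which reduces the remaining work to a routine, if slightly tedious, verification.
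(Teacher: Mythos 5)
Your argument is correct, and it takes a genuinely different route from the paper's. Both proofs reduce the lemma to the same a.e.\ indicator identity $1_C=\sum_{j}\delta(p_j,C)\,1_{T_{p_j}}$, and both ultimately rest on the same classification of vertices ($\delta=+1$ at the leftmost and rightmost vertices, $\delta=-1$ at interior vertices of the upper and lower chains). The paper, however, proves the identity globally: it notes that the triangles attached to the top-chain vertices are pairwise non-overlapping (likewise for the bottom chain), forms their unions $A_T$ and $A_B$, establishes the set equalities $A_T\cup A_B=(T_L\cup T_R)\setminus C$ and $A_T\cap A_B=T_L\cap T_R$, and finishes with a short computation with characteristic functions. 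You instead prove the identity fiber by fiber: restrict to a generic vertical line, view the signed sum as a step function in $y$, group its jumps by polygon edge, and cancel. Your route is more elementary and more mechanical --- it never needs any two-dimensional overlap structure (no disjointness claim, no union/intersection identities), only a finite table of sign checks, and those checks do succeed: for instance, for a non-straddling edge $p_ap_b$ interior to the lower chain, its supporting line is $L(p_a,C)$ (the right edge at $p_a$) and $U(p_b,C)$ (the left edge at $p_b$), so the two contributions are $+\delta(p_a,C)=-1$ and $-\delta(p_b,C)=+1$, which cancel; the straddling-edge signs come out $+1$ on the lower chain and $-1$ on the upper chain, exactly as your claim (II) requires. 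What the paper's approach buys is brevity, since one set-theoretic identity disposes of all cases at once; what yours buys is locality and easy verifiability, at the cost of the per-edge case analysis. Two small points to fold into a full write-up: when $x_0$ lies to the left of $C$, all jumps cancel and you should conclude $S\equiv 0$ from the fact that $S$ vanishes as $y\to-\infty$; and your genericity restriction can stay minimal, because even if jump locations coming from different edges happen to coincide, jumps simply add, so nothing in the bookkeeping breaks.
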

    
    
    \begin{proof}
      Partition the vertices of $\c$ into four subsets $V_L$,
      $V_R$, $V_T$, and $V_B$ as follows.
      $V_L:=\{p_L\}$ (resp., $V_R:=\{p_R\}$) consists
      of the unique leftmost (resp., rightmost) vertex of $\c$.
      $V_T:=\{{t}_1,\ldots\}$ (resp.,
      $V_B:=\{{b}_1,\ldots\}$) is the sequence of
      vertices of $\C$ from $p_R$ to $p_L$ in the
      counterclockwise (resp., clockwise) direction; refer to
      Fig.~\ref{fig:cell}.
    \begin{figure}
      \centering
      \includegraphics{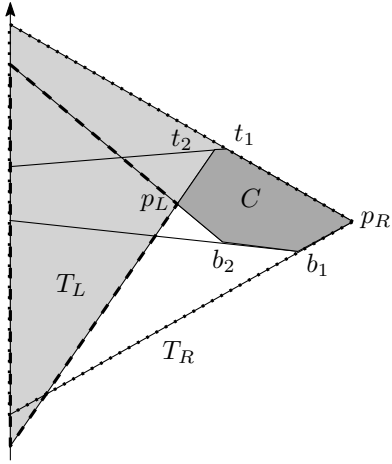}
      \caption{An illustration of the proof of Lemma~\ref{lem:sum}.
        $T_R$ is dotted; $T_L$ is dashed; $A_T$ is lightly
        shaded.}
      \label{fig:cell}
    \end{figure}
    
    To each vertex $p$, we associate the triangle $T_p$ as defined
    above.  Put $T_L:=T_{p_L}$ and $T_R:=T_{p_R}$.  Since $\c$ is
    convex, so for all $i$, line ${t}_i{t}_{i+1}$ (resp.,
    ${b}_i{b}_{i+1})$ is below the line ${t}_{i-1}{t}_i$ (resp., above
    the line ${b}_{i-1}{b}_i$) left of $\c$.
    Thus, the triangles $T_t$, $t \in V_T$, (resp., $T_b$, for
    $b\in V_B$) do not overlap.  Let $A_T = \bigcup_{t\in
      V_T} T_p$ and $A_B = \bigcup_{b\in V_B} T_b$.  By
    construction we have
    \[A_T \cup A_B = \left(T_L \cup T_R \right) \setminus \c
    \quad\text{and}\quad
    A_T \cap A_B = T_L \cap T_R.\]
    Since $\c \subset T_L \cup T_R$, 
    the first equality, written in terms of characteristic functions, gives
    \[1_{A_T} + 1_{A_B} - 1_{A_T\cap A_B} = 1_{T_L} + 1_{T_R} - 1_{T_L\cap T_R} - 1_{\c},\]
    which can be simplified, using the second equality, to yield, as promised
    \[1_{A_T} + 1_{A_B} = 1_{T_L} + 1_{T_R} - 1_{\c}. \qedhere \]
    \end{proof}

    We now consider a convex subdivision $\C$ of some bounded region
    $M$ in the plane, with each cell $\c$ associated with its own
    bivariate function $h_\c$, thereby defining a function~$h$ over
    all of $M$ (as it does not affect the value of the integral, the
    functions $h_C$ need not agree along the common boundaries of
    adjacent cells).  By summing eq.~\eqref{eq:sum} over the cells
    $\c$ of $\C$, we can compute $\iint_M h(x,y) dxdy$:
    
    \begin{Corollary}
    \label{cor:doublesum}
    \begin{equation}
    \label{eq:doublesum}
    \iint_M h(x,y) dxdy =
    \Sum_{\text{cell $\c\in\C$}}\iint_{\c} h_\c(x,y)dxdy =
        \mathop{\Sum\Sum}_{\substack{\text{cell $\c \in\C$}\\
        \text{$\c$ adjacent to $p$}}}
        \triangle(p,\c,h_\c).
         \end{equation}
    \end{Corollary}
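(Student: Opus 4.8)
The plan is to establish the two equalities separately, both being direct consequences of structures already in place. For the first equality I would use that the cells of the convex subdivision $\C$ tile $M$: their union is $M$, and any two distinct cells meet only along a shared edge or a shared vertex, a set of two-dimensional measure zero. Since by construction $h$ coincides with $h_\c$ on the interior of each cell $\c$, additivity of the integral over this tiling gives
\[
\iint_M h(x,y)\,dxdy = \Sum_{\text{cell }\c\in\C} \iint_{\c} h_\c(x,y)\,dxdy.
\]
The possible disagreement of the various $h_\c$ along common boundaries is harmless precisely because those boundaries contribute nothing to a two-dimensional integral.

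For the second equality I would simply invoke Lemma~\ref{lem:sum} cell by cell. Fixing a cell $\c\in\C$ with vertices $p_1,\dots,p_k$ and applying the lemma with $h:=h_\c$ yields
\[
\iint_{\c} h_\c(x,y)\,dxdy = \Sum_{j=1}^k \triangle(p_j,\c,h_\c),
\]
that is, the integral over $\c$ equals the sum of the signed terms $\triangle(p,\c,h_\c)$ taken over the vertices $p$ adjacent to $\c$. Substituting this into the first equality and merging the outer sum over cells with the inner sum over incident vertices produces exactly the claimed double sum over vertex-cell incidences.

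I do not expect any genuine difficulty here: the corollary is a bookkeeping step that packages Lemma~\ref{lem:sum} for an entire subdivision rather than a single polygon. The only point deserving explicit mention is the measure-zero overlap of adjacent cells, which is what legitimizes summing the per-cell integrals while simultaneously allowing each cell its own, possibly inconsistent, function $h_\c$.
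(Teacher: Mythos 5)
Your proposal is correct and follows exactly the paper's route: the paper gives no separate proof of this corollary, justifying it in one line as the result of summing eq.~\eqref{eq:sum} (Lemma~\ref{lem:sum}) over all cells of $\C$, which is precisely your cell-by-cell application plus merging of sums. Your explicit remark about measure-zero cell boundaries only spells out what the paper already notes parenthetically when defining $h$ (that the $h_\c$ need not agree on shared boundaries), so there is no substantive difference.
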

 
    
    
    One of the advantages of the formulation in eq.~\eqref{eq:doublesum}
    for our application is that an individual integral under the sum
    can be expressed as a rational function when $h_\c$ is a bivariate
    polynomial.
    
    \begin{Lemma}
      Let $p$ be an intersection point of $y=y_l+s_lx$ and
      $y=y_u+s_ux$ as above.  Then
      \[x_p = - \frac{y_u-y_l}{s_u-s_l}\]
      and
      \[\int_{x=0}^{x_{p}}\int_{y=y_l+s_lx}^{y_u+s_ux}
      x^iy^jdxdy = \frac{P_{i,j}(y_l,y_u,s_l,s_u)}{(s_u-s_l)^{i+j+1}},\]
    where $P_{i,j}$ is a polynomial of total degree $i+2j+2$.
    \label{lem:formula}
    \end{Lemma}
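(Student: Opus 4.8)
The plan is to evaluate both quantities by direct integration and then read off the algebraic structure of the answer. The value of $x_p$ is immediate: setting the two defining equations equal, $y_l+s_lx_p=y_u+s_ux_p$, and solving gives $(s_l-s_u)x_p=y_u-y_l$, that is $x_p=-\frac{y_u-y_l}{s_u-s_l}$.

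For the integral, I would first carry out the inner integration over $y$ by the power rule, obtaining
\[
\int_{y=y_l+s_lx}^{y_u+s_ux} y^j\,dy=\frac{1}{j+1}\Bigl[(y_u+s_ux)^{j+1}-(y_l+s_lx)^{j+1}\Bigr].
\]
Multiplying by $x^i$ and expanding each power $(y_\bullet+s_\bullet x)^{j+1}$ with the binomial theorem turns the integrand into a polynomial in $x$ of degree $i+j+1$ whose coefficients are polynomials in $y_l,y_u,s_l,s_u$. Integrating from $0$ to $x_p$ term by term then yields a sum, over $0\le m\le j+1$, of terms of the form $c_m\,x_p^{\,i+m+1}$, where $c_m=\frac{1}{(j+1)(i+m+1)}\binom{j+1}{m}\bigl(y_u^{\,j+1-m}s_u^{m}-y_l^{\,j+1-m}s_l^{m}\bigr)$ is a polynomial of degree $j+1$ in the four parameters.

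The crux is to substitute $x_p=-\frac{y_u-y_l}{s_u-s_l}$ and control the denominator. Since $x_p^{\,i+m+1}$ contributes a factor $(s_u-s_l)^{-(i+m+1)}$, every term with $m\le j$ already has denominator exponent at most $i+j+1$. The only term that could exceed this is the top one, $m=j+1$, whose naive denominator is $(s_u-s_l)^{i+j+2}$; here I would use that its coefficient carries the factor $s_u^{\,j+1}-s_l^{\,j+1}=(s_u-s_l)\bigl(s_u^{\,j}+s_u^{\,j-1}s_l+\dots+s_l^{\,j}\bigr)$, which cancels exactly one power of $s_u-s_l$ and brings that term's exponent down to $i+j+1$ as well. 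Placing everything over the common denominator $(s_u-s_l)^{i+j+1}$ therefore leaves a polynomial numerator, which we call $P_{i,j}$.

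Finally, I would confirm the degree bound by a direct count. For $0\le m\le j$, the numerator obtained after multiplying by the padding factor $(s_u-s_l)^{j-m}$ needed to reach the common denominator is a product of $c_m$ (degree $j+1$), the factor $(y_u-y_l)^{i+m+1}$ arising from $x_p^{\,i+m+1}$ (degree $i+m+1$), and $(s_u-s_l)^{j-m}$ (degree $j-m$), for a total of $(j+1)+(i+m+1)+(j-m)=i+2j+2$. For the top term $m=j+1$, after the cancellation the numerator is the product of the cofactor $s_u^{\,j}+\dots+s_l^{\,j}$ (degree $j$) with $(y_u-y_l)^{i+j+2}$ (degree $i+j+2$), again totalling $i+2j+2$. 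Hence $P_{i,j}$ is homogeneous of total degree $i+2j+2$, as claimed. The one genuinely delicate point is the denominator reduction in the top-order term; once that factorization is in hand, the degree count is routine bookkeeping.
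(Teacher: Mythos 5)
Your proof is correct and follows essentially the same route as the paper: integrate out $y$, expand in powers of $x$, substitute $x_p=-\frac{y_u-y_l}{s_u-s_l}$, and exploit the factorization $s_u^{j+1}-s_l^{j+1}=(s_u-s_l)(s_u^{j}+s_u^{j-1}s_l+\cdots+s_l^{j})$ in the top-degree term to bring the denominator exponent down to $i+j+1$. The only presentational difference is that the paper packages the expansion in an abstract antiderivative polynomial $Q_{i,j}$ while you use explicit binomial coefficients; your term-by-term homogeneity count in fact makes the degree bound $i+2j+2$ more transparent than the paper's terser argument.
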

    
    \begin{proof}
    Let $Q_{i,j}(u,v,x)$ be the only polynomial such that
    $\frac{\partial Q_{i,j}}{\partial x}=x^i(u+vx)^j$ and $Q_{i,j}(u,v,0)=0$
    for all $u,v\in\R$.  In particular, $Q_{i,j}$ has total degree
    $i+2j+1$, its degree in $x$ is $i+j+1$, and the coefficient of $x^{i+j+1}$ in $Q_{i,j}$ is $\frac{1}{i+j+1}v^j$.
    Now 
    \begin{align*}
    \int_{x=0}^{x_p}\int_{y=y_l+s_lx}^{y_u+s_ux} x^iy^jdxdy 
                            = & \int_{x=0}^{x_p}\frac{1}{j+1}
                                      (x^i(y_u+s_ux)^{j+1}-x^i(y_l+s_lx)^{j+1})dx\\
                            = & \frac{1}{j+1}(Q_{i,j+1}(y_u,s_u,x_p)
                                              -Q_{i,j+1}(y_l,s_l,x_p)).
    \end{align*}
    Therefore the value of the integral can be expressed as a
    polynomial in $y_l,y_u,s_l,s_u,x_p$ of total degree
    $i+2j+3$ of the form
    \[ \frac{1}{j+1}(s_u^{j+1}-s_l^{j+1})x_p^{i+j+2}
       + \Sum_{k=0}^{i+j+1}q_k(y_u,s_u,y_l,s_l)x_p^k. \]
    After substituting $-\frac{y_u-y_l}{s_u-s_l}$ for $x_p$ and bringing to a
    common denominator,
    we conclude that the expression can be rewritten in the form
    \[\frac{(s_u-s_l)P(y_l,y_u,s_l,s_u)}{(s_u-s_l)^{i+j+2}},\]
    as claimed.
    \end{proof}
    
    \section{How to integrate over an overlay}
    \label{sec:triangulation}
    
    In our problem, we are interested in computing the integral $\iint
    f(x,y)g(x,y)dxdy$, where $f$ is defined over a triangulation
    $\TTf$ of $M \subset \R^2$, with a separate linear function $f_\Delta(x,y)$
    determining $f$ over each triangle $\Delta\in\TTf$;
    $g$ is defined similarly over a different triangulation $\TTg$ of
    $M$.  The product $h(x,y):=f(x,y)g(x,y)$ is thus naturally defined
    over the convex decomposition $\overlay$ of $M$ that is the overlay of
    $\TTf$ and $\TTg$.  By Corollary~\ref{cor:doublesum}, it is
    sufficient to evaluate a sum over all vertices of $\overlay$.  The
    vertices of $\overlay$ come in two flavors: the original vertices of
    $\TTf$ and of $\TTg$, and the intersections of edges of $\TTf$ and
    $\TTg$.  Therefore,
    \begin{align*}
    \iint_M f(x,y)g(x,y)dxdy & {} =
    \mathop{\Sum\Sum}_{\substack{\text{$p$ vertex of $\overlay$}\\
                                  \text{$\c$ adjacent to $p$}}}
           \triangle(p,\c,h_\c)\\
                               & {} = \underbrace{\Sum_{\substack{p \text{ vertex}\\
                                                \text{of $\TTf\cup\TTg$}}}
                                \Sum_{\substack{\c\text{ adjacent}\\
                                                \text{to $p$}}}
           \triangle(p,\c,h_\c)}_{\Sv}
                              + \underbrace{\Sum_{\substack{p=e_1\cap e_2,\\
                                                (e_1,e_2)\in\TTf\times\TTg}}
                                 \Sum_{\substack{\c\text{ adjacent}\\
                                                 \text{to $p$}}}
           \triangle(p,\c,h_\c).}_{\Se}
    \end{align*}
    $\Sv$ involves $\O(n)$ integrals.  We preprocess each of $\TTf$
    and $\TTg$ for logarithmic-time point location queries, in $\O(n
    \log n)$ time (see, for example, \cite{4M}).  For each vertex $p$
    of $\TTf$, we locate the triangle $\Delta_g \in \TTg$ containing
    it.  Then, for each triangle $\Delta_f \in \TTf$ incident to $p$,
    we can compute
    $\triangle(p,\Delta_g\cap\Delta_h,f_{\Delta_f}g_{\Delta_g})$ in
    constant time.  The treatment of vertices of $\TTg$ is symmetric.
    We spend $\O(n \log n)$ total for point location. The remaining
    work is proportional to the sum of vertex degrees in both
    triangulations, which is $\O(n)$.  Hence $\Sv$ can be computed in
    $\O(n \log n)$ arithmetic operations.  We devote the rest of the
    discussion to computing $\Se$ efficiently.

    
    
   \section{Bipartite clique decomposition}
   \label{sec:bipartite}

   Let $E_f$ be the set of edges of $\TTf$ and $E_g$ the set of edges
   of $\TTg$.  In \cite{cegs-ablsp-94}, it is shown that it is
   possible to compute a family
   $\mathcal{F}=\{(R_1,B_1),\ldots,(R_u,B_u)\}$ where $R_k\subset E_f$
   and $B_k\subset E_g$, such that
   \begin{enumerate}[(i)]\itemsep=-\parsep 
   \item
     every segment in $R_k$ intersects every segment in $B_k$;
   \item
     every segment of $R_k$ has lower slope than every segment of
     $B_k$, or vice versa;
   \item for every intersecting pair $(e_1, e_2) \in E_f \times E_g$ there
     is exactly one $k$ such that $e_1 \in R_k$, $e_2 \in
     B_k$; no such $k$ exists for a non-intersecting pair $(e_1,e_2) \in
     E_f \times E_g$;
   \item
     $\sum_{k} (|R_k|+ |B_k|) = O(n\log^2n)$.
   \end{enumerate}
   This family can be computed in $\O(n\log^2n)$ time. 
    
    \section{Multi-point evaluation}
    \label{sec:multipoint}
    
    In this section, we explain how to efficiently compute 
    \begin{equation}
    \Sum_{\substack{p=e_1\cap e_2,\\
                     (e_1,e_2)\in R\times B}}
                                 \Sum_{\substack{\c\text{ adjacent}\\
                                                 \text{to $p$}}}
                                                 \triangle(p,\c,h_\c)
    = \Sum_{e_1\in R} \Sum_{e_2\in B} 
                                 \Sum_{\substack{\c\text{ adjacent}\\
                                                 \text{to $p=e_1\cap e_2$}}}
                                                 \triangle(p,\c,h_\c),
    \label{eq:gridsum}
    \end{equation}
    where $(R,B):=(R_k,B_k)$ is a pair of sets of triangulation edges
    produced by the bipartite clique decomposition.  In particular,
    all segments of $R$ intersect all segments of $B$ and, moreover,
    without loss of generality, the slopes of segments of $R$ are
    greater than those of segments of $B$.
    
    
    \subsection{Reduction to sums of rational functions}
    
    Each triangle of $\TTf$ and $\TTg$ is associated with a bivariate linear
    function.  If $e$ is an edge of $\TTf\cup\TTg$, let $f_u(e)$ be
    the linear function associated to the upper triangle (in the
    triangulation to which $e$ belongs) adjacent to $e$; $f_l(e)$ is
    the corresponding function for the lower triangle; we can define
    the function to be identically zero for regions outside $M$, but
    it will never be used by the algorithm.
    
    A vertex $p=e_1\cap e_2$, with $e_1\in R$ and $e_2\in B$, lies on
    the boundary of four cells of $\overlay$.  We focus on the cell
    $\c_{left}=\c_{left}(p)$ lying above $e_1$ and below $e_2$, for
    which $p$ is the rightmost point.  Thus $h_{\c_{left}}=
    f_u(e_1)f_l(e_2)$ and $\delta(p,\c_{left})=+1$.  Suppose
    $f_u(e_1)\colon (x,y) \mapsto a(e_1)+b(e_1)x+c(e_1)y$ and
    $f_l(e_2)\colon (x,y) \mapsto a(e_2)+b(e_2)x+c(e_2)y$.
    We compute the
    contribution to the sum~\eqref{eq:gridsum} of such cells, over all
    choices of $p$.  (The remaining three types of cells adjacent to
    $p$ are treated by an entirely symmetric argument.)
%
%
%
%
%
    Given an edge $e$ of $\TTf\cup\TTg$, let $y=y(e)+s(e)x$ be the
    equation of the line supporting it, so we can write
    \begin{multline*}
    \triangle(p,\c_{left}, f_u(e_1)f_u(e_2)) = {}\\
           \int_{x=0}^{x_{p}}\int_{y=y(e_1)+s(e_1)x}^{y(e_2)+s(e_2)x}
              \left(\begin{gathered}
              a(e_1)a(e_2) + (a(e_1)b(e_2)+a(e_2)b(e_1))x + {}\\
              (a(e_1)c(e_2)+a(e_2)c(e_1))y + b(e_1)b(e_2)x^2 + {}\\
              (b(e_1)c(e_2)+b(e_2)c(e_1))xy + c(e_1)c(e_2)y^2\end{gathered}\right) dxdy.
    \end{multline*}
    The above integral can be expressed as the sum of nine integrals
    of a function of the form $v(e_1)w(e_2)x^iy^j$, where $v$ and $w$
    are some functions that assign a real number to each edge.
    Gathering all the terms and recalling that $p=e_1\cap e_2$,
    eq.~\eqref{eq:gridsum} can be rewritten as the sum of $36$
    expressions of the form
    \begin{multline}
      \Sum_{e_1\in R} \Sum_{e_2\in B} 
      \int_{x=0}^{x_{e_1\cap e_2}}\int_{y=y(e_1)+s(e_1)x}^{y(e_2)+s(e_2)x}
            v(e_1)w(e_2)x^iy^j\\
       = \Sum_{e_1\in R} \Sum_{e_2\in B}
            \frac{v(e_1)w(e_2)P_{i,j}(y(e_1),y(e_2),s(e_1),s(e_2))}
                 {(s(e_2)-s(e_1))^{i+j+1}},
    \label{eq:fraction}
    \end{multline}
    by Lemma~\ref{lem:formula}.
    
    \subsection{Fast multi-point evaluation}
    
    Now we will use multi-point evaluation to speed up the computation
    of eq.~\eqref{eq:fraction}.  To accomplish this, we will replace
    the values associated to the edges of $R$ by symbolic variables,
    while using the actual numerical values for those for the edges of
    $B$.  Then we will compute the corresponding symbolic rational
    function using a divide-and-conquer strategy (Lemma
    \ref{lem:addfraction}).  Finally, we will use multi-point
    evaluation on the resulting polynomials (Lemma
    \ref{lem:multipoint}).
    
    \begin{Lemma}
    \label{lem:addfraction}
    Let $u$ and $v$ be two functions from the edges of $B$ to $\R$
    and suppose $|B|\leq n$.
    Then
    \begin{equation}
      \Sum_{e\in B} \frac{u(e)}{(X-v(e))^d}
      \label{eq:frac}
    \end{equation}
    can be expressed in the form
    \[\frac{N(X)}{D(X)},\]
    where $N(X)$ and $D(X)$ are polynomials of degree at most $(n-1)d$
    and $nd$ respectively; their coefficients can be computed
    explicitly in $\O(\M(nd)\log n )$ arithmetic operations, where
    $\M(q)=\O(q \log q)$ is the cost of multiplication of two
    univariate polynomials of degree at most $q$.
    \end{Lemma}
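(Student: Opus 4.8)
The plan is to write the sum in the obvious common-denominator form and then compute its numerator and denominator with a balanced divide-and-conquer recursion; this is the standard way to add up many rational functions that share a common structure.

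First I would record the explicit target. Put
\[ D(X) = \prod_{e\in B}(X-v(e))^{d} \qquad\text{and}\qquad N(X) = \Sum_{e\in B} u(e)\!\!\prod_{\substack{e'\in B\\ e'\neq e}}\!\!(X-v(e'))^{d}. \]
Clearing denominators shows $\Sum_{e\in B}\frac{u(e)}{(X-v(e))^{d}} = N(X)/D(X)$, and since $|B|\le n$ we get $\deg D = |B|d \le nd$, while each summand of $N$ has degree $(|B|-1)d$, so $\deg N \le (n-1)d$. This already settles existence and the degree bounds; the real content of the lemma is the running time.

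Next I would compute $N$ and $D$ recursively. Split $B$ into two halves $B_1,B_2$ of size at most $\lceil |B|/2\rceil$, recursively obtain pairs $(N_1,D_1)$ and $(N_2,D_2)$ representing $\Sum_{e\in B_i}\frac{u(e)}{(X-v(e))^{d}}=N_i/D_i$ in the same explicit form, and combine them by a single cross-multiplication,
\[ D(X) = D_1(X)\,D_2(X), \qquad N(X) = N_1(X)\,D_2(X) + N_2(X)\,D_1(X), \]
which is consistent with the closed forms above. At a leaf ($|B|=1$) we set $N=u(e)$ and expand $D=(X-v(e))^{d}$ by the binomial theorem in $\O(d)$ operations. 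Each internal node thus performs three polynomial multiplications and one addition.

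Finally I would bound the cost. At a node responsible for $b'$ edges the two child denominators have degree at most $b'd/2$ and the child numerators even less, so all three products involve polynomials of degree $\O(b'd)$ and cost $\O(\M(b'd))$; writing $b:=|B|\le n$, the total obeys $T(b)=2T(b/2)+\O(\M(bd))$. Summing over the $\O(\log b)$ levels, the level with $2^i$ nodes of size $b/2^i$ costs $2^i\,\O(\M(bd/2^i))$, and because $\M(q)=\O(q\log q)$ we have $2^i\M(bd/2^i)=\O(bd\log(bd))=\O(\M(bd))$; hence every level costs $\O(\M(bd))$ and the grand total is $\O(\M(bd)\log b)=\O(\M(nd)\log n)$, as claimed. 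The hard part is exactly this bookkeeping: one must check that the degrees stay controlled so that each combine step really costs only $\O(\M(b'd))$ at a node of size $b'$, and that the level-by-level summation of $\M$-terms collapses to a single extra factor $\log n$ --- which uses the near-linearity $\M(q)=\O(q\log q)$ rather than mere superadditivity. Everything else, namely the cross-multiplication identity and the binomial expansion at the leaves, is routine.
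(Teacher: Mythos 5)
Your proof is correct and takes essentially the same approach as the paper: a balanced binary tree of pairwise combinations via cross-multiplication $\frac{N_1D_2+N_2D_1}{D_1D_2}$ (three fast multiplications per combine), with the cost summed level by level to obtain $\O(\M(nd)\log n)$. One aside: your closing remark that the bookkeeping needs $\M(q)=\O(q\log q)$ ``rather than mere superadditivity'' is backwards --- superadditivity alone already gives $2^i\,\M(bd/2^i)\le\M(bd)$ per level, and that is exactly the argument the paper uses (``since $\M$ is superlinear'').
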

    
    \begin{proof}
      For simplicity of presentation and without loss of generality,
      assume that $|B|$ is a power of two.  We bring eq.~\eqref{eq:frac}
      to a common denominator by combining the fractions in pairs,
      reducing their number to $|B|/2$, and repeating the process
      $\log |B|=\O(\log n)$ times.  The bounds on the degree of
      the final numerator and denominator are immediate from examining
      the original fractions.

      We now explain how to bring
      \[\frac{N_1(X)}{D_1(X)} + \frac{N_2(X)}{D_2(X)},\]
      with $N_1,N_2,D_1,D_2$ of degree at most $kd$, to a
      common denominator in time $3 \M(kd)+\O(kd)$.  Indeed, the above
      fraction is equal to
      \[\frac{N_1(X)D_2(X) + N_2(X)D_1(X)}{D_1(X)D_2(X)},\]
      so it can be computed by three calls to fast polynomial
      multiplication plus a linear number of additional operations, as
      claimed.

      This completes the proof of the lemma, as the cost of one round
      of combining fractions with denominators and numerators of
      degree at most $kd$ is $n/k \cdot (3 \M(kd)+\O(kd))=\O(\M(nd))$, since
      $M$ is superlinear. 
    \end{proof}
    
    The second lemma handles summing the values of a special kind of
    polynomials.
    
    \begin{Lemma}
    \label{lem:multipoint}
    Let $P(X_0,X_1,\ldots,X_r)$ be a polynomial of degree $n$ in $X_0$
    and $d$ in $X_1,\ldots,X_r$.  Let $E$ be a set of at most $n$
    points of $\R^{r+1}$.   The values of $P$ at the points of $E$
    can be simultaneously computed in $\O(\binom{d+r}{r}\M(n)\log n)$ time.
    \end{Lemma}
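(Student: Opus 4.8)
The plan is to reduce the multivariate evaluation to a small number of \emph{univariate} multi-point evaluations, exploiting the asymmetry between the high-degree variable $X_0$ and the low-degree variables $X_1,\ldots,X_r$. First I would regroup $P$ by the monomials in the low-degree variables, writing
\[
P(X_0,X_1,\ldots,X_r) = \Sum_{\alpha} P_\alpha(X_0)\, X_1^{\alpha_1}\cdots X_r^{\alpha_r},
\]
where $\alpha=(\alpha_1,\ldots,\alpha_r)$ ranges over all multi-indices with $\alpha_1+\cdots+\alpha_r \le d$, and each coefficient $P_\alpha(X_0)$ is a univariate polynomial of degree at most $n$. The number of such monomials is exactly the number of lattice points in the $r$-dimensional simplex of total degree at most $d$, namely $\binom{d+r}{r}$; this is precisely the factor appearing in the claimed bound.

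Next, observe that for any point $(x_0,x_1,\ldots,x_r)\in\E$, the value $P(x_0,\ldots,x_r)$ is a linear combination of the $\binom{d+r}{r}$ numbers $P_\alpha(x_0)$, with coefficients the monomial values $x_1^{\alpha_1}\cdots x_r^{\alpha_r}$. Since $|\E|\le n$, at most $n$ distinct first coordinates $x_0$ occur among the points of $\E$. So for each fixed $\alpha$ I would apply the classical fast univariate multi-point evaluation algorithm to evaluate the degree-$n$ polynomial $P_\alpha$ at all $\le n$ of these first coordinates simultaneously, at a cost of $\O(\M(n)\log n)$ arithmetic operations. Performing this for each of the $\binom{d+r}{r}$ values of $\alpha$ costs $\O\!\bigl(\binom{d+r}{r}\M(n)\log n\bigr)$ in total, which already matches the target bound.

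Finally I would reassemble the answers. For each point of $\E$ I compute all the monomial values $x_1^{\alpha_1}\cdots x_r^{\alpha_r}$ by iterating over the $\binom{d+r}{r}$ exponents in a graded order in which each monomial is obtained from a previously computed one by a single multiplication by some $x_i$; this takes $\O\!\bigl(\binom{d+r}{r}\bigr)$ operations per point. The weighted sum $\Sum_{\alpha} P_\alpha(x_0)\,x_1^{\alpha_1}\cdots x_r^{\alpha_r}$ costs another $\O\!\bigl(\binom{d+r}{r}\bigr)$ per point, so the reassembly is $\O\!\bigl(n\binom{d+r}{r}\bigr)$ overall, which is dominated by the evaluation cost since $\M(n)\log n = \Omega(n)$. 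I do not expect any single hard step here: the real content is the reduction itself---recognizing that the smallness of $\binom{d+r}{r}$ lets one treat each low-degree monomial by a separate univariate multi-point evaluation---and the only delicate point is the bookkeeping of the exponent simplex, namely confirming the monomial count is $\binom{d+r}{r}$ and that the monomial values can indeed be generated incrementally in linear total time so that the reassembly never dominates.
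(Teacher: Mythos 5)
Your proposal is correct and follows essentially the same route as the paper's proof: expand $P$ with respect to the low-degree variables $X_1,\ldots,X_r$ into at most $\binom{d+r}{r}$ monomials with univariate coefficients in $X_0$, evaluate each coefficient at all first coordinates via fast univariate multi-point evaluation, and reassemble at a cost of $\O\bigl(\binom{d+r}{r}n\bigr)$. Your additional remark about generating the monomial values incrementally in graded order is a minor refinement of a step the paper simply asserts.
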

    
    \begin{proof}
      $P$ can be expanded with respect to the variables
      $X_1,\ldots,X_r$, and has at most $\binom{d+r}{r}$ monomials.
      Each coefficient is a univariate polynomial in $X_0$ of degree
      at most $n$.  Then, using standard multi-point evaluation
      algorithm for univariate polynomials \cite[ch. 10]{GGbook03}, we can
      compute simultaneously the values of these univariate coefficients
      of $P$ in $\O(\binom{d+r}{r}\M(n)\log n)$ at every point of $E$.
      Finally, we compute the values of each of the monomials of
      $P$ in time $\O(\binom{d+r}{r}n)$.  Combining all these values
      also costs $\O(\binom{d+r}{r}n)$ arithmetic operations,
      concluding the proof.
    \end{proof}
    
    Now we are ready to efficiently evaluate eq.~\eqref{eq:fraction}.
    Let $F(X,Y,V)$ be the polynomial
    \[F(X,Y,V) := \Sum_{e_2\in B} \frac{V
      w(e_2)P_{i,j}(Y,y(e_2),X,s(e_2))} {(s(e_2)-X)^{i+j+1}}.\] 
    After
    expanding the numerator of this fraction, we note that $F$ has the
    form
    \[F(X,Y,V) = \Sum_{\substack{0\leq d_X\leq i+2j+2\\
                                  0\leq d_Y\leq i+2j+2}}
                  \left(
                  \Sum_{e_2\in B} \frac{c_{d_X,d_Y,d_V}(w(e_2),y(e_2),s(e_2))}
                                         {(s(e_2)-X)^{i+j+1}}\right)
                                         X^{d_X}Y^{d_Y}V.\]
    In our case, $i+j+1\leq 3$, and, using Lemma~\ref{lem:addfraction},
    we can compute each coefficient of $X^{d_X}Y^{d_Y}V$ and
    express $F$ in the following form, in $\O(n\log^2 n)$ time:
    \[F(X,Y,V) = \frac{N(X,Y)V}{D(X)},\]
    with $N$ a polynomial of degree $n(i+j+1)+1$ in $X$, $i+j+2$ in
    $Y$, and $D$ a univariate polynomial of degree $n(i+j+1)$.
    
    Finally, eq.~\eqref{eq:fraction} can be rewritten as 
    \[\Sum_{e_1\in R} \Sum_{e_2\in B}
    \frac{v(e_1)w(e_2)P_{i,j}(y(e_1),y(e_2),s(e_1),s(e_2))}
    {(s(e_2)-s(e_1))^{i+j+1}} = \Sum_{e_1\in R}
    \frac{N(s(e_1),y(e_1))v(e_1)}{D(s(e_1))}.\] 
    As $|R|<n$, using Lemma~\ref{lem:multipoint}, we can compute
    simultaneously all the terms under the sum, and add them together in
    $\O(n\log^2n)$ arithmetic operations.
    
    \subsection{Putting it together}
    
    To summarize, we can evaluate eq.~\eqref{eq:fraction} in
    $\O((|B_k|+|R_k|)\log^2 (|B_k|+|R_k|))$ operations, for each pair
    $(R_k,B_k)$.  We also saw in section~\ref{sec:bipartite} that
    $\sum_{k} (|R_k|+|B_k|) = \O(n\log^2n)$, where $n$ is the number
    of triangles appearing in our triangulations.  Therefore
    \[
      \Sum_{k} (|R_k|+|B_k|) \log^2(|R_k|+|B_k|) \leq \Sum_{k}
      (|R_k|+|B_k|) \log^2 n = \O(n \log^4 n),
    \]
    which allows us to conclude that we can compute $\Se$ in
    $\O(n\log^4n)$ time.  This completes the proof of
    Theorem~\ref{thm:product}. 
    
    \section{Discussion}
    \label{sec:discussion}
    
    In~\cite{terrains}, the following optimization problem was
    considered: given two functions $f$ and $g$ and a distance measure
    $\|f-g\|$ between them (the paper discusses $\|\cdot\|_1$,
    $\|\cdot\|_2$, and $\|\cdot\|_\infty$, but we only consider
    $\|\cdot\|_2$ here), find the values of real parameters $s$ and
    $t$ that minimize $\|f-(sg+t)\|$.  If $f$ and $g$ are interpreted
    as geometric ``terrains,'' we are looking for the scaling and
    translation of the vertical coordinate of the terrain $g$ to best
    match terrain $f$~\cite{terrains}.  Since $\|f-(sg+t)\|_2^2=\iint
    (f-(sg+t))^2$ is a degree-two polynomial in $s$ and $t$ with
    coefficients easily expressible in terms of $\iint f$, $\iint g$,
    $\iint f^2$, $\iint g^2$, $\iint fg$, and $\iint 1$, being able to
    compute $\iint fg$ efficiently immediately yields
    \begin{Theorem}
      \label{thm:matching}
      Given piecewise-linear functions $f$ and $g$ defined over
      different triangulations of the same domain $M$, with $n$
      triangles each, the values $s$ and $t$ minimizing $\|f-(sg+t)\|_2$
      can be computed using $\O(n\log^4 n)$ arithmetic operations.
    \end{Theorem}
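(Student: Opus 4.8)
The plan is to exploit the decomposition already noted above: $\|f-(sg+t)\|_2^2 = \iint_M (f-(sg+t))^2$ expands into a quadratic polynomial in the two unknowns $s$ and $t$,
\[
\Phi(s,t) := \Bigl(\iint g^2\Bigr)s^2 + \Bigl(\iint 1\Bigr)t^2 + 2\Bigl(\iint g\Bigr)st - 2\Bigl(\iint fg\Bigr)s - 2\Bigl(\iint f\Bigr)t + \iint f^2,
\]
whose six coefficients are precisely the basic integrals $\iint f$, $\iint g$, $\iint f^2$, $\iint g^2$, $\iint fg$, and $\iint 1$ over $M$. Once these six scalars are in hand, the minimization decouples entirely from the geometry and reduces to a constant-size calculus problem. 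The proof therefore splits into two independent parts: computing the coefficients, and minimizing a fixed bivariate quadratic.

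For the coefficients, five of the six integrals are cheap. $\iint 1$ is just the area of $M$, obtainable in $\O(n)$ by summing triangle areas. Since $f$ is linear on each triangle of $\TTf$, the integral $\iint f$ is a sum, over the $n$ triangles, of (area)$\times$(average of the three vertex values), each term computable in $\O(1)$; $\iint g$ over $\TTg$ is symmetric. For $\iint f^2$, note that $f^2$ restricted to a single triangle of $\TTf$ is a bivariate quadratic, whose integral over that triangle has a closed form in terms of the triangle's geometry and vertex values, so summing over the $n$ triangles again costs $\O(n)$, and $\iint g^2$ is likewise. The only expensive integral is the cross term $\iint fg$, and this is exactly what Theorem~\ref{thm:product} computes in $\O(n\log^4 n)$ arithmetic operations. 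Hence all six coefficients become available in $\O(n\log^4 n)$, which dominates the running time.

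Having the coefficients, I would minimize $\Phi$ by setting its gradient to zero, which yields the $2\times 2$ linear system
\[
\begin{pmatrix}\iint g^2 & \iint g\\ \iint g & \iint 1\end{pmatrix}\begin{pmatrix}s\\ t\end{pmatrix} = \begin{pmatrix}\iint fg\\ \iint f\end{pmatrix}.
\]
Its matrix is the Gram matrix of $g$ and the constant function $1$ in $L_2(M)$, hence positive semidefinite, so $\Phi$ is convex; its determinant $(\iint 1)(\iint g^2)-(\iint g)^2$ is nonnegative by the Cauchy--Schwarz inequality and strictly positive unless $g$ is almost everywhere constant. In the generic case the system therefore has a unique solution $(s,t)$, computable in $\O(1)$, at which $\Phi$ attains its global minimum. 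This final step is essentially free; the entire cost is inherited from computing $\iint fg$, yielding the claimed $\O(n\log^4 n)$ bound.

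The one genuine subtlety — which I would flag as requiring care rather than as a real obstacle — is the degenerate case where $g$ is almost everywhere constant, so that the Gram determinant vanishes and the scaling $s$ and translation $t$ are no longer independently determined. There $\Phi$ is only convex, not strictly convex, and one must either report the resulting one-parameter family of optima or fix a canonical representative. Detecting this case amounts to testing whether the computed determinant is zero, so it does not affect the asymptotic running time.
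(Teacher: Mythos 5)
Your proposal is correct and follows essentially the same route as the paper: expand $\|f-(sg+t)\|_2^2$ into a quadratic in $s$ and $t$ whose coefficients are the six integrals, observe that all but $\iint fg$ are computable in linear time over each function's own triangulation, invoke Theorem~\ref{thm:product} for the cross term, and minimize the resulting constant-size quadratic. The paper states this in a single sentence; your write-up merely fills in the details (the explicit normal equations, convexity via Cauchy--Schwarz, and the degenerate case of constant $g$), all of which are consistent with the paper's argument.
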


    Theorems~\ref{thm:product}, \ref{thm:distance}, 
    and~\ref{thm:matching} extend to piecewise-polynomial functions of
    constant maximum degree with essentially no modifications.
    What other classes of functions can be handles using similar methods?


    


    \subsection*{Acknowledgments}
    
    The authors would like to thank Raimund Seidel, Christian Knauer and
    Sylvain Lazard for helpful discussions. The authors were also inspired to
    work on this problem at the Workshop on Discrete and Algebraic Geometry
    in September 2010, at Val d'Ajol, France.

\end{document}